%
%
%
%
%

\documentclass[9pt, preprint]{sigplanconf}

\usepackage{amsmath}
\usepackage{amsthm}
\usepackage{amssymb}
\usepackage[noend]{algorithmic}
\usepackage{algorithm}
\usepackage{amsmath}
\usepackage{fixltx2e}
\usepackage{color}
\usepackage{epsfig}
\usepackage{float}
\usepackage{citesort}
\usepackage{mathptmx}
\usepackage{url}
\usepackage{multicol}
\usepackage{multirow}
\usepackage{xspace}

\newtheorem{theorem}{Theorem}
\renewenvironment{proof}{\noindent {\bf Proof.}\ }{\qed\par\vskip 4mm\par}

\floatstyle{ruled}
\floatname{algorithm}{Procedure}

\begin{document}

\title{Memory Hierarchy Sensitive Graph Layout}
\authorinfo{Amitabha Roy\\University of Cambridge}{amitabha.roy@cl.cam.ac.uk}

\maketitle

\begin{abstract}
Mining large graphs for information is becoming an increasingly important
workload due to the plethora of graph structured data becoming available. An
aspect of graph algorithms that has hitherto not received much interest is
the effect of memory hierarchy on accesses. A typical system today has multiple
levels in the memory hierarchy with differing units of locality; ranging
across cache lines, TLB entries and DRAM pages. We postulate that it is possible
to allocate graph structured data in main memory in a way as to improve the
spatial locality of the data. Previous approaches to improving cache locality
have focused only on a single unit of locality, either the cache line or virtual
memory page. On the other hand cache oblivious algorithms can optimise layout
for all levels of the memory hierarchy but unfortunately need to be specially
designed for individual data structures. In this paper we explore hierarchical
blocking as a technique for closing this gap. We require as input a
specification of the units of locality in the memory hierarchy and lay out the
input graph accordingly by copying its nodes using a hierarchy of breadth first
searches. We start with a basic algorithm that is limited to trees and then
extend it to arbitrary graphs. Our most efficient version requires only a
constant amount of additional space. We have implemented versions of the
algorithm in various environments: for C programs interfaced with macros, as an
extension to the Boost object oriented graph library and finally as a
modification to the traversal phase of the semispace garbage collector in the Jikes
Java virtual machine. Our results show significant improvements in the access
time to graphs of various structure.
\end{abstract}

\section{Introduction}
\label{sec:intro}
Modern computer systems usually consist of a complex path to memory. This is
necessitated by the difference between the speed of computation and that of
accessing memory, often referred to as the memory wall. While microprocessor
performance has increased by 60\% every year, memory systems have increased in
performance by only 10\% every year. 

The typical solution employed by memory designers is to use faster smaller
caches to cache data from larger but slower levels of memory. For example, a
typical CPU cache would cache 64 byte lines from main memory while a Translation
Lookaside Buffer (TLB) would cache mappings for 4KB chunks of virtual address
space. A typical access to memory therefore has to negotiate many levels of
hierarchy. Locality therefore has an important role to play for the in-memory
processing of large datasets. If accesses are clustered (blocked) on the same 64
byte or 4KB chunk of memory (which we call units of spatial locality), it will
lead to fewer transfers between levels in the memory hierarchy and consequently
better performance.

Graphs form an important and frequently used abstraction for the processing of
large data. This is more so today, with increasing interest in mining graph
structured data: common examples being page-ranking that examines the
hyperlinking between web-pages, community detection in social networks,
navigational queries on road-network data or simulating the spread of epidemics
(viruses) over human (computer) networks. Thus far, little attention has been
paid to mitigating the impact of the memory hierarchy on processing large
graphs. This paper makes the case that sensitivity to the memory hierarchy can
make a big difference to the costs of processing large graphs.  

Existing research along the same lines can be divided into two categories. The first
category optimises object layout and connectivity taking into account only
\emph{one} level of the memory hierarchy. These algorithms however are suitable
for use at runtime on arbitrary graphs. The second category are cache-oblivious
algorithms that can optimise data structure layouts without knowing the
precise hierarchy in use on the machine. Unfortunately cache-oblivious
algorithms have been designed only for specific data structures and the
techniques cannot be applied to graphs in general.

This paper proposes a Hierarchical Blocking Algorithm (HBA) as a solution. The
HBA proposed in this paper takes arbitrary graphs as input and produces a layout
that is sensitive to all levels of the memory hierarchy, information about
which is supplied to the algorithm. We show that not only does this make a large
difference to the processing of graphs, it also performs comparably to a cache
oblivious layout. The HBA therefore, closes the gap between cache oblivious and
cache sensitive (but limited to a single level) algorithms, an important
contribution of this paper.

The rest of this paper is organised as follows. We begin with some intuition
about HBA and describe how it is motivated by cache oblivious algorithms in
Section~\ref{sec:motivation}. We then describe a basic version of HBA applicable
only to trees in Sections~\ref{sec:theory}, \ref{sec:analysis} and \ref{sec:pract}. We then provide
extensions that make it applicable to arbitrary graphs in
Section~\ref{sec:simple} and extensions for space efficiency in
Section~\ref{sec:super}. We then describe implementation in three different
environments (custom graph processing in C, Boost C++ graph libraries and the
semispace garbage collector in the Jikes Java Virtual Machine) in
Section~\ref{sec:implementations}. We then evaluate HBA and show that it
delivers significant speedups in all these environments (from 10\% to as much as
21X). We then discuss related work and possible future extensions to HBA, before
concluding.

\section{Motivation and Intuition}
\label{sec:motivation}
The hierarchical blocking algorithm we have developed draws strong inspiration
from the van Emde Boas (VEB) layout. The VEB tree, originally proposed in a
paper outlining a design for a priority queue~\cite{veb} is an arrangement that
makes a tree data-structure cache oblivious i.e.\ likely to provide good
performance regardless of the cache hierarchy or units of spatial locality in operation.
The van Emde Boas(VEB) layout had
provided some of the initial motivation for this work.  

Figure~\ref{fig:veb} details the intuition behind the VEB layout. 
The VEB layout is a layout of a tree that is done by repeatedly splitting it
at the middle and \emph{recursively} laying out all the component subtrees in
contiguous units of memory. In the figure, the tree of depth $D$ is split into a
subtree (rooted at the original tree) of depth $\frac{D}{2}$ and this is
recursively laid out first. Next, the remaining subtrees, $O(2^\frac{D}{2})$ in
number, are laid out recursively. The VEB layout is complex to setup and maintain
for trees and difficult to apply to graphs in general. The first step in
applying it to a graph is to traverse the graph and prepare a sub-graph in the
form of a tree that covers it. This spanning tree could then be laid out in a
VEB layout. The key difficulty however is determining where to cut the tree
since a-priori knowing the diameter of a graph and its splits at runtime is a
difficult business. Further, the VEB layout does not consider heterogeneous
graphs where the objects representing graph vertices may have different sizes,
rendering it impractical to apply.

Our approach instead is to make the problem somewhat easier by assuming that the
memory hierarchy (of caches) is known at runtime as an input to the
algorithm. This can be used during the traversal to determine the spanning
subtree in conjunction with information about the size of in-memory
representations of graph vertices to determine the right split-point.

\begin{figure}
\centering
\scalebox{0.25}{\epsfig{file=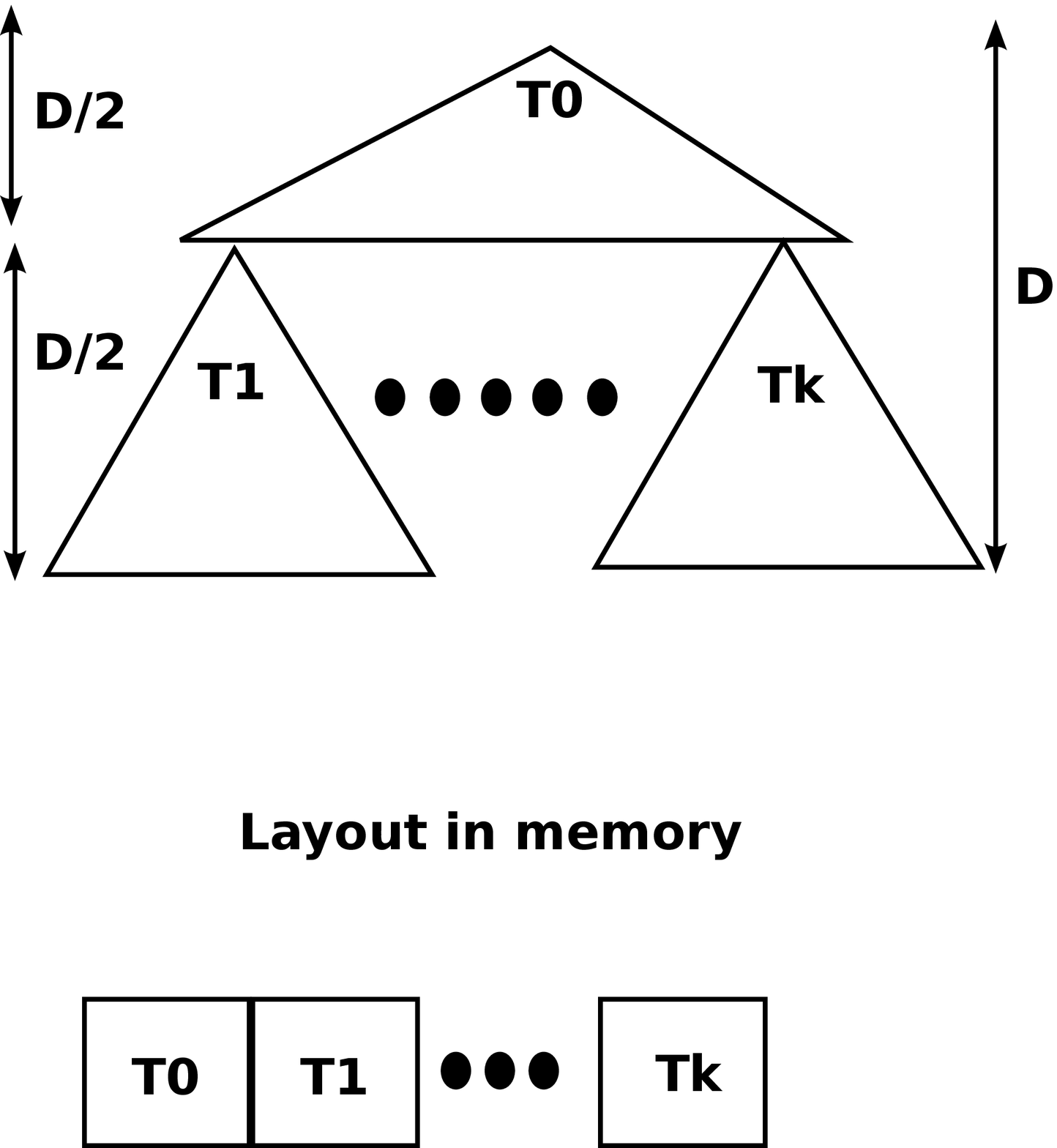}} 
\caption{van Emde Boas Layout}
\label{fig:veb}
\end{figure} 

\begin{figure}
\centering
\scalebox{0.25}{\epsfig{file=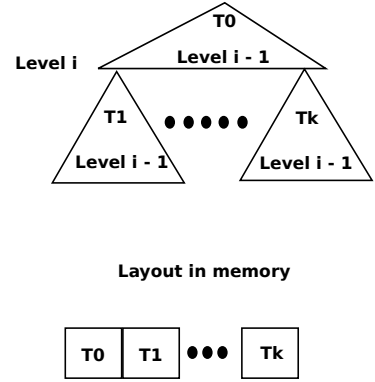}} 
\caption{Hierarchical Blocking Layout}
\label{fig:layout}
\end{figure}

Figure~\ref{fig:layout} shows graphically how this might be done. Assume an
algorithm $P_i$ that aims to copy a tree while traversing it, into blocks that
fit into the cache at level $i$. Using breadth first search, it can discover the
entire subtree that fits into a block at level $i$. It can then call breadth
first searches for individual subtrees that are rooted at the leaves of this
subtree (not shown in the Figure). For the subtree it has identified, it can
\emph{recursively} call $P_{i -1}$: an algorithm that can lay out a given tree
into blocks that fit into the cache at level $i-1$. This is shown in the Figure
and corresponds (roughly) to the recursive layout achieved by VEB. The key
difference is that we \emph{know} where to cut the spanning tree based on
runtime information about the memory hierarchy rather than simply using half the
diameter of the graph.

Having provided intuition behind our Hierarchical Blocking Algorithm (HBA), we
now proceed to discuss it in more detail below.

\section{Hierarchical Blocking for Trees}
\label{sec:theory}
In this section we develop a hierarchical blocking algorithm applicable to
trees. A tree is a graph (either directed or undirected) where every vertex has a
\emph{unique parent} and is itself connected to a number of \emph{children}.
We express the algorithm in terms of repeated breadth-first
searches~\cite{cormen} each of which is bounded to produce roots for new
searches.

We begin by introducing some basic notation that we use in this section. We
denote the application of an Algorithm $A$ to a graph vertex $x$ as $A(x)$, which
produces as output a list of vertices. We denote application of an algorithm $A$ to
a list of vertices $L$ by $A(L)$, which is done by applying it to \emph{each
individual vertex} and \emph{concatenating the outputs in order}. We denote the
repeated application of an algorithm (using the output of one as the input of
the next) as $A^k$, which means that we apply algorithm $A$ $k$ times. 

An important concept for hierarchical blocking is the space occupied by
representations of a vertex. For a vertex $v$ we assume a way to measure the
space occupied by the vertex, which we represent as $|v|$. This naturally extends
to applying an algorithm $A$ on input $I$ as $|A(I)|$, which is just
the sum of the spaces occupied by every vertex that is processed to produce the
output.

The core algorithm is bounded breadth first search, which we abbreviate
as $\text{BFS}_d$ that takes as input one vertex and produces a list of vertices that
are at distance $d$ from the start vertex. We measure distance as the number of
edges traversed. \emph{We also delegate to $\text{BFS}_d$ the job of copying traversed
vertices into a spatially contiguous unit of memory.}

We consider here a memory hierarchy of $n$ levels with monotonically
increasing units of spatial locality: $s_i$ for level $i$, with $s_i <
s_{i+1}$.  We now define a blocking algorithm that takes as input a list of
vertices and memory hierarchy levels, and recursively calls itself with
decreasing levels. We denote the algorithm for level $i$ as $P_i$ taking as
input a single vertex $o$. As indicated above, using $P_i$ on a list of vertices
naturally follows from the definitions given below.

\begin{itemize}
\item $P_1(o)$ = $\text{BFS}_d(o)$ where we choose depth $d$ such that :
\begin{itemize}
\item If $|\text{BFS}_0(o)| > s_1$ then $d = 0$
\item Else choose $d$ such that $|\text{BFS}_{d-1}(o)| < s_1$ and $\text{BFS}_{d}(o) \geq s_1$
\end{itemize}
\item $P_i(o) := P_{i-1}^k(o)$ where we choose $k$ such that:
\begin{itemize}
\item If $|P_{i-1}(o)| > s_i$ then set $k=1$
\item Else choose $k$ such  $|P_{i-1}^{k-1}(o)|< s_i$ and  $|P_{i-1}^{k}(o)|\geq s_i$
\end{itemize}
\end{itemize}
At the start we are given the root of the tree: $r$.  For hierarchical blocking
of the tree, we repeatedly apply $P_n$ starting from $r$ until we have copied all the vertices (the
output list is empty). The formalism given above produces exactly the layouts
that we have provided an intuition for in the previous section. A crucial point
to note here is that we allow the copying to overshoot the set limit by an
amount bounded by one application of the algorithm at the next underlying level.

\section{Analysis}
\label{sec:analysis}
A traversal of the tree needs to transfer blocks of size $s_i$ from the
$i^{\text{th}}$ level in the memory hierarchy. We now provide an upper bound on
the number of such blocks transferred. We make the observation that any
application of $P_n$ can be ultimately expressed as repeated applications of
$P_i$ for any $i < n$.  Consider a traversal of the copied tree generated by
$P_i(x)$ for an arbitrary vertex $x$ in the input graph. We consider a traversal
that starts from the copy of $x$ produced by $P_i(x)$ and terminates at some
leaf in the copied subtree produced by  $P_i(x)$.

For any memory hierarchy level $j$ this traversal leads to the transfer of some
number of blocks of size $s_j$.  Let an upper bound on the number of memory
blocks at level $j$ accessed due to this traversal be $B_i^j$, \emph{regardless
  of the start vertex}.

\begin{theorem}
$B_{i+1}^{i+1} = 2 + B_{i}^{i+1}$
\end{theorem}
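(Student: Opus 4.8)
The plan is to exploit the defining recursion $P_{i+1}(o) = P_i^{k}(o)$ together with the fact that $\text{BFS}_d$ (and hence every $P_j$) copies each batch of traversed vertices into a contiguous region of memory immediately following the previous batch. First I would observe that the copied subtree produced by $P_{i+1}(x)$ splits into two \emph{consecutive, contiguous} memory regions: a \emph{top} region $T$ holding everything produced by $P_i^{k-1}(x)$ (the first $k-1$ rounds), and a \emph{bottom} region $C$ holding everything produced by the last round, i.e.\ $P_i$ applied to the frontier list $L_{k-1}$ output by $P_i^{k-1}(x)$. The region $C$ is itself the concatenation, over $v \in L_{k-1}$, of the subtrees copied by the individual applications $P_i(v)$. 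By the choice of $k$ (in the non-overshoot case), the invariant $|P_i^{k-1}(x)| < s_{i+1}$ holds, so $T$ is a contiguous region of length strictly less than one block at level $i+1$; the $k=1$ overshoot case is the degenerate one where $P_{i+1}(x) = P_i(x)$ and $T$ is trivial.

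Next I would analyse an arbitrary root-to-leaf traversal inside the $P_{i+1}(x)$-subtree that starts at the copy of $x$. Such a path either stays entirely within $T$, or it leaves $T$ at some vertex $v \in L_{k-1}$ and then descends through the \emph{single} subtree $P_i(v)$ until it reaches a leaf --- it can never enter more than one of the $P_i(v)$-blocks comprising $C$, since a downward path inside a copied subtree never leaves that subtree. In the first case the whole path lies in a contiguous region of length $< s_{i+1}$, which meets at most $2$ blocks of size $s_{i+1}$ (an interval shorter than one block straddles at most one block boundary), and $2 \le 2 + B_i^{i+1}$. In the second case, the part of the path inside $T$ again meets at most $2$ such blocks, while the part inside $P_i(v)$ meets at most $B_i^{i+1}$ such blocks by the definition of $B_i^{i+1}$ (which is uniform over the start vertex, here $v$). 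Taking the union and noting that any overlap between the two block sets only lowers the count, the traversal touches at most $2 + B_i^{i+1}$ blocks of size $s_{i+1}$, which establishes that $2 + B_i^{i+1}$ is a valid value of the bound $B_{i+1}^{i+1}$.

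The step I expect to be the main obstacle is the alignment bookkeeping: $P_{i+1}$-blocks are not aligned to $s_{i+1}$-boundaries, so I must argue carefully that (i) a contiguous span of length $< s_{i+1}$ really meets at most two aligned blocks no matter where it begins, and (ii) the block in which $T$ ends may be the same as the first block of the terminal $P_i(v)$, which is harmless for an upper bound but prevents simply adding $2$ and $B_i^{i+1}$ when one wants the reverse inequality. To justify the stated equality rather than just ``$\le$'', I would additionally exhibit a tree together with sizes $s_i, s_{i+1}$ for which some root-to-leaf path genuinely straddles a block boundary inside $T$ while the terminal $P_i(v)$-subtree begins in a fresh block and itself realises $B_i^{i+1}$; this pins the recursion down to exactly $B_{i+1}^{i+1} = 2 + B_i^{i+1}$ for use in the remainder of the analysis.
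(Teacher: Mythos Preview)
Your argument is essentially the paper's: split $P_{i+1}(x)=P_i^{k}(x)$ into the first $k-1$ rounds (a contiguous span of size $<s_{i+1}$, hence at most two level-$(i{+}1)$ blocks by the misalignment observation) and a single terminal $P_i(v)$ subtree contributing $B_i^{i+1}$. Your extra work toward a matching lower bound is not needed, since the paper defines $B_i^j$ merely as \emph{an} upper bound, so the ``equality'' in the statement is just the assertion that $2+B_i^{i+1}$ is an admissible choice for $B_{i+1}^{i+1}$, which your upper-bound argument already establishes.
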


\begin{proof}
For any $x$: $P_{i+1}(x)$ is defined as $P_{i}^d(x)$ with $|P_{i}^{d-1}(x)| \leq s_{i+1}$.
Now traversing a block of memory of size $s_{i+1}$ can incur accesses to at worst 2
blocks at level $i + 1$ (if the block start is not aligned).

The remaining part of the traversal is to a subtree produced by $P_{i}(y)$ for
some leaf $y$ of the previous traversal incurring at most $B_i^{i+1}$ block
transfers. Hence we have: $B_{i+1}^{i+1} = 2 + B_{i}^{i+1}$
\end{proof}

Under the common conditions where each level in the memory hierarchy is
sufficiently smaller than the next level and $B_1^1$ fits within $s_2$ we can
deduce a simple constant upper bound on $B_i^i$ for any $i$.

\begin{theorem}
If $\;\;\;\forall j\;\;\;B_1^1 < s_2 \;\;\;\text{and}\;\;\;4s_j \leq s_{j+1}$ then $ B_i^i \leq 4$
\end{theorem}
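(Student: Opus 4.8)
The plan is to prove this by induction on $i$, using the recurrence $B_{i+1}^{i+1} = 2 + B_i^{i+1}$ from the previous theorem together with a bound relating $B_i^{i+1}$ back to $B_i^i$ via the size gap $4s_j \leq s_{j+1}$. The base case is $i=1$: since $B_1^1 < s_2$ by hypothesis, a traversal confined to a copied subtree produced by $P_1(x)$ touches at most $B_1^1$ bytes' worth of level-$1$ blocks, which all fit inside a region of size $s_2$; such a region spans at most $2$ blocks of size $s_2$ (accounting for misalignment), so $B_1^1 \leq 2 \leq 4$.

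For the inductive step I would first establish the auxiliary claim that $B_i^{i+1} \leq 2$ whenever the span of the traversal fits within $s_{i+1}$. The key observation is that a traversal inside a subtree produced by $P_i(x)$ visits copied vertices whose total size is at most $|P_i(x)|$, and by the recursive definition of $P_i$ this is bounded by $s_i$ plus one overshoot term of size at most $s_{i-1} \leq s_i$, hence at most $2s_i$. Using $4s_i \leq s_{i+1}$, the whole traversal lies in a contiguous region of size at most $2s_i \leq s_{i+1}/2$, which can straddle at most $2$ blocks of size $s_{i+1}$. Therefore $B_i^{i+1} \leq 2$, and plugging into the recurrence gives $B_{i+1}^{i+1} = 2 + B_i^{i+1} \leq 4$.

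The main obstacle I anticipate is making the overshoot bookkeeping rigorous: the definition of $P_i$ only guarantees $|P_{i-1}^{k-1}(o)| < s_i$ before the last step, so $|P_i(o)| = |P_{i-1}^k(o)|$ can exceed $s_i$ by a full block $|P_{i-1}(\cdot)|$ at the next level down, and one must argue this overshoot is itself at most $s_i$ (or more carefully, at most $s_{i+1}$ for the level-$(i+1)$ count). This requires the chain $s_{i-1} \leq s_i$ and, at the bottom, that a single $\text{BFS}_d$ block has size at most roughly $s_1$ plus one vertex — strictly one should assume individual vertices are small relative to $s_1$, or absorb this into the constant. A secondary subtlety is that $B_i^{i+1}$ as defined bounds accesses over a traversal within one $P_i$-subtree regardless of start vertex, so I must make sure the region being straddled is genuinely the contiguous copied image, not the original graph layout; this follows because $\text{BFS}_d$ is delegated the job of copying into spatially contiguous memory, so all of $P_i(x)$ occupies one contiguous run of at most $2s_i$ bytes. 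Once that contiguity-plus-size fact is pinned down, the arithmetic with the factor-$4$ gap is routine.
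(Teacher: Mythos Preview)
Your inductive step takes a genuinely different route from the paper. The paper's argument at level $k+1$ actually \emph{uses} the induction hypothesis $B_k^k \leq 4$: four level-$k$ blocks occupy at most $4s_k \leq s_{k+1}$ bytes, which is at most one misaligned level-$(k+1)$ block, hence at most two aligned ones, giving $B_k^{k+1}\leq 2$ and thus $B_{k+1}^{k+1}\leq 4$. You instead bound the total size $|P_i(x)|$ directly (via the overshoot telescoping) by roughly $2s_i$, then use contiguity of the copied image plus $2s_i \leq s_{i+1}/2$ to get $B_i^{i+1}\leq 2$. Your step never invokes $B_i^i\leq 4$, so what you have is really a direct argument dressed as induction. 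Both routes are valid; the paper's is shorter because it sidesteps all the overshoot bookkeeping you flag as the main obstacle, while yours is more explicit about why the touched memory is a single contiguous run (a point the paper's ``four blocks at level $k$ is at most one misaligned block at level $k+1$'' leaves implicit---those four blocks need not be adjacent unless you know the whole $P_k$ image is contiguous and small).

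One small slip in your base case: the argument that the traversal fits in a region of size $s_2$ and hence straddles at most two level-$2$ blocks bounds $B_1^2$, not $B_1^1$. You conclude $B_1^1\leq 2$ from reasoning about level-$2$ blocks, which is the wrong level. The paper's own base case is also loose here (the hypothesis $B_1^1 < s_2$ compares a block count to a byte size), so this is more a place where both treatments are informal than a gap unique to your proposal.
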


\begin{proof}
The theorem is true at $i = 1$ due to the conditions of the theorem where $B_1^1
< s_2$ and $s_2$ is at most 4 blocks of $s_1$. We now give an inductive proof.
Let the theorem be true till $i = k$. We have $B_{k+1}^{k+1} = 2 +
B_k^{k+1}$. From the induction hypothesis $B_k^k \leq 4$. Four blocks at level
$k$ is at most one misaligned block at level $k+1$ (due to the bounds on sizes
at each level) which is at most two aligned blocks at level $k+1$. Hence
$B_{k+1}^{k+1} \leq (2+2) = 4$
\end{proof}

In the context of the whole tree a traversal from root to leaf in the copied
tree incurs repeated costs of $B_n^i$ at memory hierarchy level $i$. If we
assume that $P_i(x)$ covers subtrees of depths at least $d_i$ then the number of
block accesses at memory hierarchy level $i$ for a traversal of depth $D$ is
bounded by $4\frac{D + 1}{d_i + 1}$. For a pseudorandom allocation of vertices
to memory, one would normally expect every access to cause a transfer, leading
to $D+1$ transfers. The hierarchical blocking algorithm is therefore able
to guarantee reduced transfers when $d_i > 3$.

Note that this is a pessimistic upper bound. For example, at the lowest level
(usually cache lines) any organisation that packs subtrees of depth one into a
cacheline leads to better performance with one cacheline serving two access
requests instead of one.

\section{Iterative Version}
\label{sec:pract}
We now present an iterative version of the hierarchical blocking algorithm
(HBA). In addition to being easier to understand, implement and analyse; it
forms the basis for extension to handle arbitrary graphs. The {\tt
HBTreeIterative} algorithm listed in Procedure~\ref{alg:reorg} is a direct
translation of the recursive algorithm described in the previous section. It
takes as input a root vertex and a description of a memory hierarchy and
performs runtime hierarchical blocking of the tree rooted at the supplied
vertex. Starting from this section, we introduce the term 'node', that we use as
an abstraction for the memory occupied by a graph vertex (and any associated
edge data structure, such as an edge list).

\begin{algorithm}
\caption{HBTreeIterative}
\label{alg:reorg}
\begin{algorithmic}[1]
\REQUIRE root: root to block from
\REQUIRE n: levels of hierarchy
\REQUIRE s[1..n]: block sizes (monotonically increasing)
\REQUIRE s[n+1] = INFINITY
\STATE Initialise to empty: Dequeue roots[1..n+1]
\STATE Initialise to empty: Dequeue leaves[1..n+1]
\STATE Initialise to zero: space[1..n+1] 
\STATE roots[n+1].push\_back(root)
\STATE level := n + 1
\LOOP
\IF{roots[level].is\_empty()}
\STATE ////////Refill
\STATE roots[level] := leaves[level]
\STATE leaves[level] := empty
\IF{space[level] $\geq$ s[level]}
\STATE //////Promote
\STATE leaves[level + 1].append(roots[level])
\STATE roots[level] := empty
\STATE space[level + 1] := space[level + 1] + space[level]
\STATE level := level + 1
\STATE continue loop 
\ENDIF
\ENDIF
\IF{roots[level].is\_empty()}
\STATE //////Promote
\IF{level = (n + 1)}
\STATE TERMINATE
\ELSE
\STATE space[level + 1] := space[level + 1] + space[level]
\STATE level := level + 1
\ENDIF
\ELSE
\STATE node := roots[level].pop\_front()
\IF{level $>$ 1}
\STATE //////Push work down
\STATE roots[level - 1].push\_back(node)
\STATE space[level - 1] := 0
\STATE level := level - 1
\ELSE
\STATE /////Do some copying work
\IF{UnconditionalCopyNode(node)}
\STATE space[1] := space[1] + sizeof(node)
\STATE leaves[1].append(children(node))
\ENDIF
\ENDIF
\ENDIF
\ENDLOOP
\end{algorithmic}
\end{algorithm}

\begin{algorithm}
\caption{UnconditionalCopyNode}
\label{alg:ucopy}
\begin{algorithmic}[1]
\REQUIRE node: node to copy
\STATE Copy node to tospace
\STATE tospace := tospace + sizeof(node)
\STATE Update parent of node in tospace to point to copy
\RETURN true
\end{algorithmic}
\end{algorithm}

The core data structures used in the algorithm are lists of {\tt roots} and
{\tt leaves} for each level of the hierarchy. In addition the {\tt space} array maintains the
amount of space used at each level. Lines 34--36 of the algorithm implement
$\text{BFS}_d$. This is done by taking the root node for the BFS, copying it
(through the call to {\tt UnconditionalCopyNode} and updating the space
used. The children produced during this BFS step are added to {\tt leaves[1]}.
If the amount of space used is less than the unit of spatial locality
for hierarchy level 1, all the leaves of the BFS are moved to {\tt roots[1]} and
a BFS step is subsequently performed for each of them to uncover their
children. Thus, only when the total space consumed at this lowest level is equal
to, or exceeds the unit of spatial locality at the lowest level, is the {\tt
  level} variable bumped and all the produced leaves of the BFS moved to level 2. It is easy to
see that this replicates the operation of $\text{BFS}_d$ described in the
previous section and discovers $d$ \emph{dynamically}.

For any $\text{level} = i >1$, the remaining steps of {\tt HBTreeIterative}
implement $P_i$. Recall that the input to $P_i$ is a list of nodes, this is held
in {\tt roots[i]}.  Lines 11-17 check whether repeated applications of $P_{i-1}$
have exhausted the unit of spatial locality given by $s[i]$. If so, the output
leaves are passed on to level $i+1$. Else, we repeatedly call $P_{i-1}$ on the
head of the $roots[i]$ list. Finally, the level {\tt n + 1} is simply a
placeholder for the output of $P_n$. For convenience, it is assumed to have
an infinite amount of spatial locality, i.e.\ it covers the whole memory.

Copying services are provided by {\tt UnconditionalCopyNode} that copies nodes
into a region of memory whose top is held in the {\tt tospace} variable. We
assume that this region of memory is infinite. A practical implementation of
{\tt UnconditionalCopyNode} could simply call a standard heap allocation
function (such as {\tt malloc}) to allocate memory to copy into, although this
inserts metadata before the copied object that can reduce locality (as we
discuss later).

\begin{figure}
\centering
\scalebox{0.6}{\epsfig{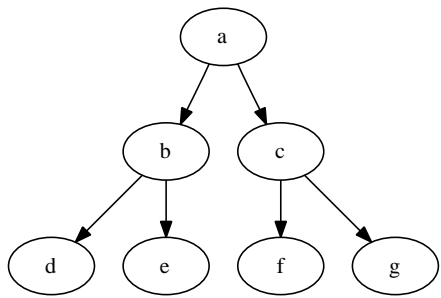}} 
\caption{Small Example Tree}
\label{fig:example}
\end{figure}

In order to better understand the operation of \\{\tt HBTreeIterative}, consider
the graph of Figure~\ref{fig:example}. If the input to {\tt HBTreeIterative} is
the node {\tt a} then the first thing the algorithm does is to add {\tt a} to
roots[n+1]. The node {\tt a} then bubbles down to roots[1] through repeated
iterations of the loop. It is then passed to \\{\tt UnconditionalCopyNode} at line
34. Next, line 36 adds its children, nodes {\tt b} and {\tt c} to {\tt
  leaves[1]}. Since {\tt roots[1]} is now empty, both {\tt b} and {\tt c} are
copied into {\tt roots[1]} at line 9. Assume for this example that at least one
node fits into {\tt s[1]} and so {\tt b} and {\tt c} are processed in turn 
resulting in {\tt roots[1]} containing {\tt d}, {\tt e}, {\tt f}, and {\tt
  g}. If now the algorithm finds {\tt space[1] > s[1]} then it promotes these
four nodes to level 2 and then calls $P_2$ on them (unless {\tt s[2]} is also
exhausted). This finally results in a consecutive layout in memory of nodes {\tt
  a}, {\tt b} and {\tt c}, followed by (partial) subtrees rooted at {\tt d},
{\tt e}, {\tt f} and {\tt g} produced by $P_2$.

\subsection{Complexity}
Given a tree with $N$ nodes and $E$ edges, the {\tt HBTreeIterative} algorithm
performs a graph traversal on it. It visits every edge exactly once (in line
36). For any node, after discovery, the node is added to every one of the root
and leaf lists \emph{at most once}. Hence, given $L$ levels in the memory
hierarchy the algorithm has a worst-case complexity of $O(LN + E)$. In a tree
the number of edges is one less than the number of nodes and hence the
complexity is $O(LN)$. Note that in
practice with settings such as $s[1]=64$ for and $s[2]=1024$ (used in this
paper) most nodes are only added to lists at levels 1 and 2 before being
processed and never make it to higher levels. Practically, this keeps the
overhead of the algorithm (and its variants) close to $O(N + E)$ or $O(N)$ for
trees.

\subsection{Limitations}

A key limitation of {\tt HBTreeIterative} is that it applies only to
trees. There are two reasons why it cannot be used on arbitrary graphs.  The
first is that {\tt UnconditionalCopyNode} expects that it is passed any node
exactly \emph{once}. This is easily violated in the case of multiple parents (as
in directed acyclic graphs) or graphs with cycles. A related problem arises
because more than one pointer may exist to a node and hence {\tt
  UnconditionalCopyNode} should be able to update parent pointers even if the
node has already been copied. In the next section we describe extensions to {\tt
  HBTreeIterative} that allow it to be used on arbitrary graphs.

\section{Extension to Arbitrary Graphs}
\label{sec:simple}
Extending {\tt HBTreeIterative} to arbitrary graphs first requires lowering our
level of abstraction somewhat. Along these lines, we introduce the notion of a
slot. A slot is simply a pointer to a node. Any given arbitrary graph is
therefore 'rooted' at multiple slots. Readers familiar with garbage collectors
in Java will notice that we have borrowed these two terms from
there~\cite{gc_book}.

Processing an arbitrary graph requires processing each root slot in turn. This
is done by calling {\tt InitiateCopy} in Procedure~\ref{alg:initcopy}. For every
given root slot, it calls {\tt HBTreeIterative} thereby processing individual
components of the graph. Note that we \emph{do not require graphs with different
  roots to be unreachable from each other}.

\begin{algorithm}
\caption{InitiateCopy}
\label{alg:initcopy}
\begin{algorithmic}[1]
\REQUIRE roots: queue of slots to start copying
\WHILE{not roots.is\_empty()}
\STATE slot := roots.pop\_front()
\STATE HBTreeIterative(*slot, levels, Sizes[1..levels])
\ENDWHILE
\end{algorithmic}
\end{algorithm}

The only change we make to {\tt HBTreeIterative} is to call {\tt
  ConditionalCopyNode} at line 34 instead of \\{\tt
  UnconditionalCopyNode}. Procedure~\ref{alg:ccopy} describes the former. The major change is the introduction of 
the Forward table that has an entry for \emph{each} possible node, indicating
whether that node has been forwarded. If not
already forwarded, it forwards (determines the position in tospace) the node and returns an appropriate
indication. {\tt HBTreeIterative} then uses the returned indication to ensure
that every node is considered \emph{at most once} thereby solving the problem of
multiple parents and cycles encountered in arbitrary graphs.

\begin{algorithm}
\caption{ConditionalCopyNode}
\label{alg:ccopy}
\begin{algorithmic}[1]
\REQUIRE node: node to copy
\IF{Forward[node] = UNFORWARDED}
\STATE Forward[node] := tospace
\STATE tospace := tospace + sizeof(node)
\RETURN true
\ELSE
\RETURN FALSE
\ENDIF
\end{algorithmic}
\end{algorithm}

\begin{algorithm}
\caption{CompleteCopy}
\label{alg:compcopy}
\begin{algorithmic}[1]
\REQUIRE roots: list of pointers to start copying
\WHILE{not roots.is\_empty()}
\STATE slot := roots.pop\_front()
\IF{not Copied[*slot]}
\STATE Copied[*slot] = true
\STATE Copy *slot to Forward[*slot]
\ENDIF
\STATE *slot := Forward[*slot]
\FORALL{child\_slot pointers in *slot}
\STATE roots.push\_back(child\_slot)
\ENDFOR
\ENDWHILE
\end{algorithmic}
\end{algorithm}

Since {\tt ConditionalCopyNode} no longer updates pointers or copies nodes, we  introduce a
post-processing phase called {\tt CompleteCopy}. This is shown in Procedure~\ref{alg:compcopy}
and is called after {\tt InitiateCopy} has completed. It traverses the graph
starting at the roots again and maintains a {\tt Copied} map to avoid copying a
node more than once. It also \emph{updates} all the slots to point to the copies
in {\tt tospace}.

\subsection{Complexity}
We now consider the complexity of {\tt InitiateCopy},\\ {\tt HBTreeIterative} and
{\tt CompleteCopy} taken together. Slots now explicitly represent edges in the
graph. Every slot (edge) is still considered at most once (or twice if it is a
root slot), this includes lookups in the extra maps. Any given node is also
processed at most once at copying and enters (and leaves) every one of the $L$
lists at most once. Hence the \emph{asymptotic} complexity of the algorithm
remains at $O(LN + E)$.

\subsection{Limitations}
The extensions to deal with arbitrary graphs suffer from two key problems. The
first problem naturally is the need to have two passes through the graph.
The second problem is the space cost of maintaining the extra
maps. A related problem that we have not thus far considered is the cost of
maintaining the root and leaf dequeues. Even maintained as linked lists (as we
do) they require one {\tt next} pointer per node. Note that the space overheads are
bounded by $O(N)$ and do not depend on the number of edges. Nevertheless, it is
desirable to try to eliminate them.

In spite of these limitations, we use an actual implementation of the generalised
HBA described in this section in the evaluation. For applications where offline
Reorganisation of large graphs is acceptable it is simple to implement and
effective. 

\section{Single-Pass and Possibly Metadata-Less Blocking}
\label{sec:super}
We now introduce the final and most sophisticated HBA. Before introducing the
algorithm, we make some observations about the operation of {\tt
  HBTreeIterative} in the case of general graphs. For any node that is copied,
all its unvisited children are added as a group to {\tt leaves[1]}. This group of nodes
continues \emph{unbroken} through various lists until it enters a {\tt roots}
list. After that they are dequeued in order to be bubbled down and copied. Note
that once a node is picked off a {\tt roots} list at line 26 of {\tt
  HBTreeIterative} it is copied immediately.

The key idea we take away from this observation is that it is possible to
\emph{represent this group} of nodes by its parent. Once the group (parent) enters a {\tt
  roots} list, instead of popping the parent, we pop \emph{slots in the parent}
one by one and bubble them down in turn to be processed. Processing the slot
involves both conditionally copying the target and updating the slot to point to
the new version of the node. We now introduce {\tt HBGraphOnePass} that
incorporates these ideas. Unlike the version for trees, it takes as input the
root slot to start processing from (and not the root node pointed to by that
slot). It still depends on (a slightly modified for interface reasons) {\tt
  InitiateCopy} to iterate through roots but eliminates {\tt CompleteCopy}.

\begin{algorithm}
\caption{HBGraphOnePass}
\label{alg:general}
\begin{algorithmic}[1]
\REQUIRE root\_slot: root slot to block from
\REQUIRE n: levels of hierarchy
\REQUIRE s[1..n]: block sizes (monotonically increasing)
\REQUIRE s[n+1] = INFINITY
\STATE Initialise to empty: Dequeue roots[1..n+1]
\STATE Initialise to empty: Dequeue leaves[1..n+1]
\STATE Initialise to zero: space[1..n+1] 
\STATE level := 1
\STATE /////Conditionally copy root slot
\STATE old\_node := *root\_slot
\IF{CopySlot(root\_slot)}
\STATE space[1] := space[1] + sizeof(*root\_slot)
\STATE leaves[1].append(old\_node)
\ENDIF
\LOOP
\IF{roots[level].is\_empty()}
\STATE ////////Refill
\STATE roots[level] := leaves[level]
\STATE leaves[level] := empty
\IF{space[level] $\geq$ s[level]}
\STATE //////Promote
\STATE leaves[level + 1].append(roots[level])
\STATE roots[level] := empty
\STATE space[level + 1] := space[level + 1] + space[level]
\STATE level := level + 1
\STATE continue loop 
\ENDIF
\ENDIF
\IF{roots[level].is\_empty()}
\STATE //////Promote
\IF{level = (n + 1)}
\STATE TERMINATE
\ELSE
\STATE space[level + 1] := space[level + 1] + space[level]
\STATE level := level + 1
\ENDIF
\ELSE
\STATE slot := roots[level].pop\_front\_slot()
\FORALL{i $>$ 1 and i $<$ level}
\STATE //////Init level
\STATE space[i] := 0
\ENDFOR
\STATE level := 1
\STATE /////Do some copying work
\STATE old\_node := *slot
\IF{CopySlot(slot)}
\STATE space[1] := space[1] + sizeof(*slot)
\STATE leaves[1].append(old\_node)
\ENDIF
\ENDIF
\ENDLOOP
\end{algorithmic}
\end{algorithm}

\begin{algorithm}
\caption{CopySlot}
\label{alg:slotcopy}
\begin{algorithmic}[1]
\REQUIRE slot: slot to copy
\STATE copied := false
\IF{Forward[*slot] = UNFORWARDED}
\STATE Forward[*slot] := tospace
\STATE tospace := tospace + sizeof(*slot)
\STATE copy *slot to Forward[*slot]
\STATE copied := true
\ENDIF
\STATE *slot := Forward[*slot]
\RETURN copied
\end{algorithmic}
\end{algorithm}

{\tt HBGraphOnePass} also uses a slightly different helper routine {\tt
  CopySlot} to complete copying of nodes. It directly updates the slot with the
copy of the node. Assuming that copying was required the node is then explored
for children. Note that in line 39 of {\tt HBGraphOnePass} we append the
\emph{old} node to the {\tt leaves[1]} list. This node is bubbled up and
ultimately moves to a {\tt roots} list. In line 30, instead of popping the node,
we pop its children one by one. This can be implemented by maintaining constant
sized state about which child has been popped. Note that we hoist a copy of part of the
processing for the root\_slot to lines 6--9.

Other than these changes {\tt HBGraphOnePass} operates similarly to {\tt
  HBTreeIterative}. To illustrate this, consider again the example graph in
Figure~\ref{fig:example}. The algorithm is passed the slot pointing to node {\tt
  a}.  It then copies node {\tt a} and adds {\tt a} \emph{itself} to {\tt
  leaves[1]}. Assuming {\tt s[1]} is not exhausted {\tt a} then moves to {\tt
  roots[1]}. Slots containing children of {\tt a} are then popped off by the
call to {\tt pop\_front\_slot} and this {\tt b} and {\tt c} are copied
next. Tracing the operation further, it should be evident that {\tt
  HBGraphOnePass} produces the same layout in memory as {\tt HBTreeIterative}
for the example.

\subsection{Complexity}
The complexity analysis for {\tt HBGraphOnePass} is substantially the same as in
the previous section. Every slot is considered at most once from lines 30--39 of
{\tt HBGraphOnePass} (other than being passed in as a root slot). Nodes traverse
every list at most once. Thus {\tt HBGraphOnePass} also has an asymptotic
complexity of $O(LN + E)$.

\subsection{Eliminating Metadata}
\label{sec:nomdata}
A simple observation also serves to eliminate the need for $O(N)$ extra
metadata. All nodes in a graph would have at least one pointer worth of space
(unless the graph is using a particularly compressed format). Further if the
graph is to represent any form of branching it would have space for at least two
pointers in its node representation. 

We use the first available pointer to store a pointer to the forwarded
copy. This eliminates to need for the {\tt   Forward} table since slots that
need copying point to old objects that can be looked up to determine the forward
pointer. In our implementations, we set the last bit to distinguish the forward
pointer from the same field in objects that have not yet been copied (since they
would point to objects aligned at 4 byte boundaries in our
implementation). Further, we can use the other
available pointer for manipulation of the lists representing the dequeues. This
eliminates the need for any external metadata, removing the need for $O(N)$
extra space.

Note that this elimination is made possibly by the organisation of {\tt
  HBGraphOnePass} that uses parent nodes to represent groups of children. In the
absence of this observation, we would have been forced to use dequeues of slots
in order to eliminate the extra pass thus rendering impossible elimination of
extra metadata for the dequeues. Further, this metadata-less one pass {\tt
  HBGraphOnePass} algorithm is a significant advancement over previous work.
Cheney~\cite{cheney} had shown that it was possible to use a breadth-first
traversal over objects in the heap without the need for any extra
metadata. Although Wilson~\cite{wilson} had developed hierarchical BFS for a
single level, it required one pointer per page of memory. {\tt HBGraphOnePass}
subsumes Wilson's algorithm as a special case of a single-level memory hierarchy
and also admits implementation without the need for extra metadata, similar to
Cheney's copying algorithm.

Of course, the technique described in this subsection is optional. For example
one could also allocate the extra $O(N)$ metadata directly in objects, such as
we have done for integration with the Jikes Java Virtual Machine~\cite{jikes}
garbage collector in one of our implementations. There, the forwarding pointer
already existed in the object header and we found it simplest to just add another
field for manipulation in the dequeue lists.

\section{Implementations}
\label{sec:implementations}
In this section we describe three different environments into which we have
integrated versions of our HBA. This section focuses on graph
representation and concrete interfaces for HBA. Another important focus area for
this section is memory allocation. Allocating target memory for copied nodes can
broadly follow one of two strategies. One is to use the system provided memory
manager that is already in use. This has the advantage of integrating cleanly
with existing code that uses graphs since there is no need to write an
additional memory manager and allocated nodes can be freed by the rest of the
application. A disadvantage to this approach is that system memory managers
(such as {\tt malloc}) introduce additional metadata at the head of each
object. This should be taken into account when calculating object sizes in the
HBA and additionally reduces the effectiveness of blocking in improving spatial
locality. Also, memory managers such as {\tt malloc} often use discontiguous
pools for objects of different sizes. This introduces further fragmentation if
graph nodes are differently sized, which is often the case for variable sized
edge lists attached to the nodes. The other option is to use a memory manager
with external metadata to manage to space copied to, which can introduce the
complexity of using multiple memory managers. 

We have implemented HBA in three environments for evaluation: custom graph
implementations written in C, as an add-on to the Boost Graph Library in C++ and
finally as a modification to the traversal phase of the semispace copying
collector in the Jikes Java Virtual Machine. We now discuss these
implementations individually.

\subsection{Boost C++ Graph Library}
\label{sec:boost}
The Boost C++ Graph Library (BGL)~\cite{boost} is a library for in-memory
manipulation of graphs. It makes extensive use of C++ generics (making extensive
use of templates) to provide a customisable interface for storing graph
structured data. We wrote an extension to the library that takes as input a
graph stored in the \emph{adjacency list} representation and produces a new
graph after hierarchical blocking. The adjacency list representation stores a
list of vertices in an iterable container, that allows one to iterate over every
vertex in a graph and then apply a suitable function. For each vertex the list
of edges originating at (for directed graphs) or connected to (for undirected
graphs) is maintained in another iterable container attached to the
vertex. Although our implementation is generic, we experimented with C++ STL
vectors as the container for both the types of components.

Our HBA addition to Boost uses the simpler version of the algorithm described in
Section~\ref{sec:simple}. It makes use of two external $O(N)$ sized vectors and
assumes a canonical mapping from the set of vertices to non-negative integers:
$f:V\rightarrow \{1\;\;..\;\;|V|\}$.  This is already provided by Boost and we
use this integer to index into the $O(N)$ sized vectors. The first vector is
used to maintain a ``next'' pointer ($f(\text{next})$ rather than
$\text{next}$). The second vector (we call it the {\tt RemapVector}) assigns,
for each given node in the input graph a number indicating its position in the
copied graph i.e\ if {\tt RemapVector(j) = RemapVector(i) + 1} then node {\tt j}
should be copied right after node {\tt i} in the output graph. It is easy to see
how the remap vector is set up by calls to {\tt ConditionalCopyNode}. We use the
produced remap vector in {\tt CompleteCopy} to actually produce the output
graph.

We use the memory allocator provided by Boost thereby incurring the overheads
described above. We have assumed for object size calculation that each edge
occupies an area equal to three pointers (for source, destination and
edge-weight information) and two pointers worth of memory management
metadata. Since the vertices are already laid out in a vector, we multiply the
number of outgoing edges by the space occupied by five contiguous pointers to
determine the object size for the HBA algorithm.

Our decision to use the simpler algorithm for integration into Boost was
guided in part by the observation that others have deemed the overheads of
storing all the vertices of a graph in memory acceptable~\cite{semiem}.

\subsection{Custom Graph Implementations in C} 
\label{sec:custom-c}
The Boost graph library introduces a number of overheads internal to the objects
used to represent graph vertices and edges, in part due to the need to be
generic and object-oriented. In order to explore the benefits that HBA can bring
to graphs constructed out of carefully designed minimal objects we also wrote
a custom implementation of binary search trees and undirected graphs for
evaluation. For the C implementations, we allocated a large chunk of memory to
copy the nodes into, this is done by maintaining the size of the graph as it is
loaded and calculating the total space required for the copy, in advance. This
eliminates all overhead due to memory manager metadata. It is easy to write a
memory manager that makes use of external metadata~\cite{vee} to manage this
space, although we have not done so for this implementation.

\subsubsection{Binary Search Trees}
We use the fairly minimalist representation of binary search tree nodes shown
below:
{\scriptsize
\begin{verbatim}
/* Basic bst node */
typedef struct node_st {
  unsigned long k; struct node_st *l, *r;
#ifndef NO_REORG_PTR
  struct node_st **reorg_next;
#endif  
}node_t;
\end{verbatim}}

We have explored both {\tt HBTreeIterative} (that makes use of the {\tt
  reorg\_next} pointer) as well as {\tt HBGraphOnePass} that eliminates that
pointer.

\subsubsection{Undirected Graphs}
We have also written representations for undirected graphs in C. These make use
of the data structures described below:

{\scriptsize
\begin{verbatim}
typedef struct node_st {
  int id;  struct node_st* neighbours[0];
} node_t;

node_t **node_vector;
int *neighbour_cnts;
int node_cnt;
\end{verbatim}}

The graph node structure contains an integer identifier and an array of pointers
to its neighbours. Since this is an undirected graph, two neighbouring nodes
point to each other through their {\tt neighbours} arrays. In addition, we have
a list of vertices ({\tt node\_vector}), a list of neighbour counts ({\tt
  neighbour\_cnts}) and finally the count of the total number of nodes in the
graph ({\tt node\_cnt}). Maintaining the size of the neighbours array outside the
data structure improves cache-line utilisation. Note that as with the Boost
implementation, we have an enumeration of vertices as integers that allows
indexing appropriate arrays. A point that might not be evident is that HBA also
results in better utilisation of linear arrays such as {\tt
  neighbour\_cnts}. This is because adjacent nodes are more likely to be placed
 close to each other in those arrays. This was key to our decision to move the
neighbour count out of the containing  {\tt node\_t} object.

Finally, we also wrote an extension to our implementation of undirected graphs
to illustrate a \emph{beneficial and powerful application of the HBA
algorithm}. We allow the writing out of the entire graph after HBA to disk. The
nodes are written out in the order that they are produced by HBA. Therefore,
reloading the nodes results in an in-memory representation of the graph that is
\emph{already blocked}. Although we show in our evaluation that the overheads of
HBA are tolerable enough to apply at runtime, this feature serves to illustrate
that offline HBA of in-memory graphs and storing the results in a persistent
manner is very much possible and eliminates the overheads of HBA when processing
static graphs.
 
\subsection{Jikes RVM}
We have also implemented HBA as an extension to the traversal phase of a
semispace copying collector in the Jikes Java Virtual Machine. We have done this
to illustrate the ability of HBA to operate in dynamic environments with varying
graphs. The fundamental idea of the semispace copying collector is to divide the
heap into two 'spaces'. At any instance of time only one space is 'active', and
is used to allocate objects. When a garbage collection is triggered, all
mutator threads (that can change the connectivity or contents of objects on the
heap) are stopped. A collector thread then runs a traversal phase that finds all
reachable objects on the heap using a depth-first search in the baseline
implementation and copies them (on discovery) to the other 'inactive'
space. After completion of this traversal, the 'inactive' and 'active' spaces
switch roles until the next garbage collection cycle.

We have modified the traversal phase to use HBA in order to copy objects with
appropriate clustering. We use the single pass {\tt HBGraphOnePass}
algorithm. Since the object-header used in the Jikes RVM had already allocated a
pointer to hold miscellaneous information, we used this pointer to hold
forwarding information. We added another pointer size field to hold next pointer
information for maintaining the dequeues. We use this slightly bloated object
representation as the baseline (without HBA) as we felt that this fairly
reflected the fact that we could have eliminated this overhead with extra
work. Our implementation in the Jikes RVM is at a prototype level only, in part
to avoid the complexity of refactoring the garbage collection classes to
implement an optimised version. For example, it is difficult to interrupt the
scanning of objects on the heap to determine slots. Hence we have a suboptimal
implementation of {\tt pop\_front\_slot} that simply uses an array of 32 slots
to hold the results of a complete scan. Any overflows from this array are
treated as new roots by the HBA implementation. Nevertheless, we found the
implementation adequate to demonstrate the feasibility of integrating HBA into
the garbage collector of a managed environment, thereby showing that is can be
used in such environments and on changing graphs.

\section{Performance Evaluation}
We evaluate HBA on an system equipped with an Intel i5-2400S CPU and 16 GB of
RAM. For uniformity (the JikesRVM is 32-bit only) we use 32 bit code and thus
are limited to using under 4GB of main memory. We know a-priori that the system
has the following caches in its memory hierarchy (with corresponding settings
for the HBA):

\begin{enumerate}
\item Various caches (L1 data, L2 and L3) with a 64 byte line size. We set {\tt s[1]
  = 64}.
\item Open-page mode DRAM that provides lower latencies for consecutive accesses
  to the same 1024 byte page. We set {\tt s[2] = 1024}.
\item TLB caching page table translations for 4KB pages. A TLB miss incurs
  significant penalty for page table walks. We set {\tt s[3] = 4096}.
\item TLB caching of super page translations. The OS (linux) clusters groups of
  pages into 2MB superpages to reduce consumption of page table and TLB
  entries. We set {\tt s[4] = 2097152}.
\end{enumerate}

It is unclear to us (as it would be to users of HBA in the field) about which
level is most likely to impact performance for a particular graph and particular
traversal algorithm. Hence we usually use full HBA with the settings above,
indicated as {\tt HBA(all)} in the evaluation. Occasionally we use HBA for only
a subset of levels: such as the page level {\tt HBA(4k)}, this is essentially
produces the layouts of Wilson's hierarchical BFS~\cite{wilson}.

\subsection{C, Binary Search Tree}
We first consider the performance of a binary search tree written in C. The tree
is setup to hold a contiguous integer keyspace and is then queried with random
keys. We are interested in the average query time (measured over a minute of
continuous queries) as the traversal is affected by the locality of the
nodes. We investigate the following layouts of tree nodes in memory:
1. Psedorandom layout 2. BFS layout ((such as would be produced by Cheney
  et. al.\ \cite{cheney}) 3. DFS layout, some researchers have suggested that this
  might be a better way to layout nodes for locality than BFS~\cite{stamos}
  4. VEB layout and finally 5. HBA layout using the algorithm in this paper.

The results are shown in Figure~\ref{fig:bst}. As expected the pseudorandom layout
performs the worst. BFS performs better than DFS. The best performance is
provided by HBA, \emph{which performs almost comparably to VEB}. At a tree depth
of 25 (~64 million nodes) using HBA reduces query time by approximately 54\%
while using BFS reduces query time by approximately 31\%. A notable feature of
the graph is the knee around the tree depth of 18. This is because beyond that
depth the tree no longer fits in the 6MB last level cache leading to a sudden
increase in query time.

\begin{figure}
\centering
\scalebox{0.25}{\epsfig{angle=-90, file=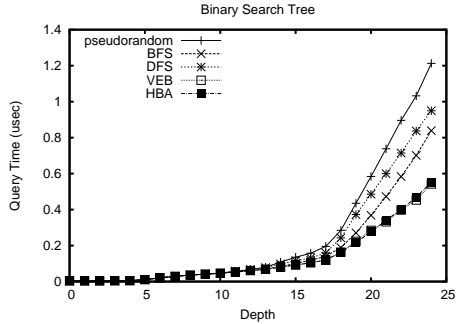}} 
\caption{Binary Search Tree Performance}
\label{fig:bst}
\end{figure}

Not every level of cache has an equal impact on performance. To illustrate this,
we ran the same experiment restricting HBA to various subsets of the memory
hierarchy. The results, shown in Figure~\ref{fig:bst_hba_vary} illustrate that
the cache (64 byte units of spatial locality) and the VM page (4K unit of
spatial locality) have the maximum impact on tree access.

\begin{figure}
\centering
\scalebox{0.25}{\epsfig{angle=-90, file=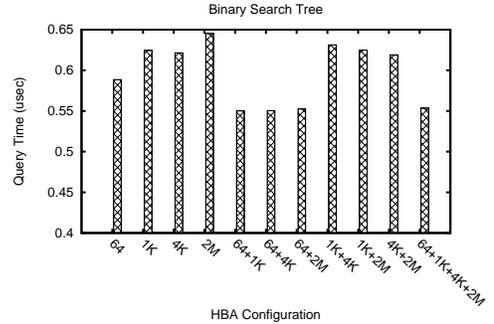}} 
\caption{Varying HBA parameters}
\label{fig:bst_hba_vary}
\end{figure}

Finally, we verify that HBA is indeed improving cache access. We used
cachegrind~\cite{cachegrind} to instrument the queries and simulate various
levels of the memory hierarchy of the actual system. Table~\ref{table:mr} shows
the miss rates for various levels of the hierarchy. Although DFS and BFS both
improve miss rates, HBA is most effective at reducing miss rates, explaining its
better performance. It is interesting to note that HBA when run over all levels
reduces miss rates for any level to that produced by running HBA to block only
for that level. This is an important result, since it illustrates that HBA
provides additive benefits for all the memory hierarchy levels it is aware
of. Finally, we note that HBA is almost as effective at tackling miss rates as
the cache-oblivious VEB layout.

\begin{table}
{\scriptsize
\begin{tabular}{|c||c|c|c|c|}\hline
     & L1d line & DRAM page & VM page & VM superpage \\\hline
Pseudorandom   &   0.42   &  0.53     &  0.44      & 0.42 \\\hline\hline 
BFS   &   0.36   &  0.43     &   0.26   & 0.07 \\\hline 
DFS & 0.33     & 0.23      & 0.19 & 0.11 \\\hline
VEB & \textbf{0.24}     & \textbf{0.15}      & \textbf{0.05} & \textbf{0.02} \\\hline\hline
64          & \textbf{0.25}     & 0.19  & 0.10 & 0.03\\\hline
1K           & 0.31     & \textbf{0.19}      & 0.07  & 0.02 \\\hline
4K           & 0.32     & 0.25      & \textbf{0.07} & 0.02\\\hline
2M           & 0.33     & 0.34    & 0.14 & \textbf{0.02} \\\hline
64+1K+4K+2M & \textbf{0.25}     & \textbf{0.17}  & \textbf{0.06} & \textbf{0.02}\\\hline
\end{tabular}}
\caption{Cachegrind miss rates}
\label{table:mr}
\end{table}

Finally, we illustrate the effect of removing HBA related metadata from the tree
node, using the {\tt HBGraphOnePass} algorithm and reusing pointer fields from
the old version of the object (as discussed in Section~\ref{sec:nomdata}). The
results shown in Figure~\ref{fig:bstnomdata} illustrate the improvements
obtained due to the lower memory footprint of the tree nodes:
approximately 14\% lower than that with an extra pointer per tree node.

\begin{figure}
\centering
\scalebox{0.25}{\epsfig{angle=-90, file=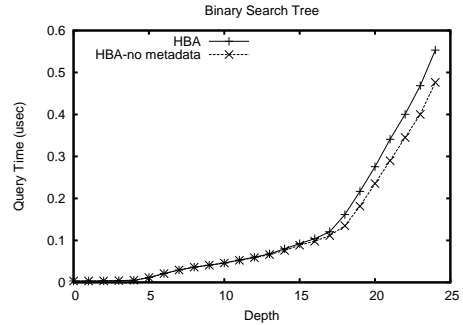}} 
\caption{Binary Search Tree: Effect of removing HBA metadata}
\label{fig:bstnomdata}
\end{figure}

\subsection{Arbitrary Graphs}
Trees represent an ideal workload for HBA, since they correspond exactly to the
spanning tree built during traversal. In this section, we consider more complex
graphs with a large number of connections. We use a synthetic graph generator
that is part of the SNAP suite~\cite{snap}. We consider various kinds of graphs
that are of current interest to the research community involved in mining
information from graph structured data:

\begin{enumerate}
\item Watts-Strogatz small world model~\cite{watts} (10 million nodes, 29
  million edges): These graphs have 
  logarithmically   growing diameter and model small world networks, such as
  social networks with the informally well known ``six degrees of separation''.  
\item Barabasi-Albert model~\cite{albert} (10 million nodes, 39 million edges) also models real world phenomena but
  provides graphs  where the out-degree of nodes follows a power law
  distribution. This is often  the case, for example, with web pages that link
  to each other.
\item 2d mesh (9 million nodes, 17 million edges) models real world road networks. Answering real time navigational
  queries on such networks are often a component of many online services.
\item 4ary tree (10 million nodes, one less edge). To provide some perspective on binary trees considered thus
  far, we also measure performance on trees where each node has 4 children.
\end{enumerate}

We use two different algorithms (in two different implementations). The first is
a single source shortest path algorithm (using Dijkstra~\cite{cormen}) that finds shortest paths from a given
source to all nodes in the graph. We use a random assignment of wights to edges
(a uniform random choice over  a range of size the same as the number of
vertices). We use an implementation of this algorithm in Boost
(Section~\ref{sec:boost}). We then turn the input graph into an undirected and
unweighted version by adding a reverse edge for every given edge and performing
a breadth-first search in our custom C environment (Section~\ref{sec:custom-c}).
The choice of algorithms and datasets is fairly similar to other approaches that
evaluate the performance of graph processing solutions~\cite{semiem}.

For each algorithm, data set and environment we measure the speedup of the
algorithm after HBA on the graph using both HBA(4k) as well as HBA(all) i.e.\
blocking for VM pages and for all levels of the hierarchy respectively. The
results shown in Table~\ref{table:graph} \emph{underscore the efficacy of HBA for
arbitrary graphs}. Large speedups (as high as 21X !) are obtained with
HBA. Speedups are generally higher for our custom C environment due to the
optimised (reduced) object footprints. Optimising for all levels of the memory
hierarchy often provides better performance than just optimising for one level,
underscoring the importance of a multi-level blocking algorithm. The results
also illustrate an interesting example of destructive interference between
levels. HBA(4k) performs slightly better than HBA(all) in the case of 2d
meshes. Other than this example, we have found that in all cases HBA(all)
performs at least as well as HBA(4k).

\begin{table}
{\scriptsize
\begin{tabular}{|c||c|c||c|c|}\hline
\multirow{2}{*}{Graph} & \multicolumn{2}{c||}{SSSP} & \multicolumn{2}{c|}{BFS} \\\cline{2-5}
 & HBA(4K) & HBA(all) & HBA(4k) & HBA(all)\\\hline
Watts\_Strogatz   &   1.41   &  1.44     &  1.38      & 1.40 \\\hline
Albert\_Barabasi   &   1.01   &  1.02     &   1.09   & 1.11 \\\hline 
2d\_mesh & 2.38     & 2.35      & 3.85 & 3.80 \\\hline
4ary\_tree & 1.00     & 1.00      & \textbf{20.70} & \textbf{21.31} \\\hline
\end{tabular}}
\caption{Graph Processing Speedups}
\label{table:graph}
\end{table}

\subsection{JikesRVM}
Our final set of results are from the Jikes Java Virtual machine. We measured
the performance of the same binary search tree considered in the C environment,
when implemented in Java. We configured the JVM to use a 1GB heap for the
experiments. We also configured it to perform a system-wide GC before starting
the query phase of the test. The results, shown in Figure~\ref{fig:bst_jikes}
indicated that the benefits seen with C are also replicated in the Java
environment. HBA for all levels with a tree depth of 22 provides a 29\% speedup
over the baseline version, while HBA for only the VM page provides a 19\%
speedup over the baseline version. Note that JVM memory limitations meant we
were unable to build trees of larger depth.

\begin{figure}
\centering
\scalebox{0.25}{\epsfig{angle=-90, file=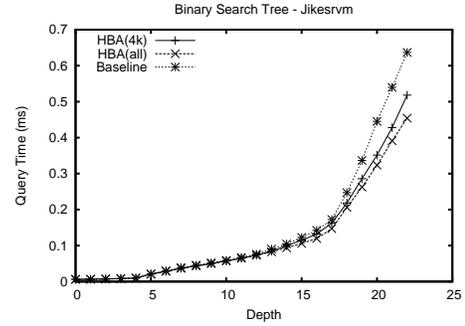}} 
\caption{Binary Search Tree: JikesRVM Performance}
\label{fig:bst_jikes}
\end{figure}

In a runtime environment the overhead of collection is also an important
factor. With this in mind, we measured the time for a semispace copy of the
entire heap after the tree has been completely built. The results are shown in
Figure~\ref{fig:bst_jikes_overhead}.  In the worst case, HBA adds an overhead of
only 18\%. Crucially the overhead of optimising for \emph{all} levels is the
worst case only 10\% more than optimising for only the page level. The average
overheads are much lower, well under 10\%.

\begin{figure}
\centering
\scalebox{0.25}{\epsfig{angle=-90, file=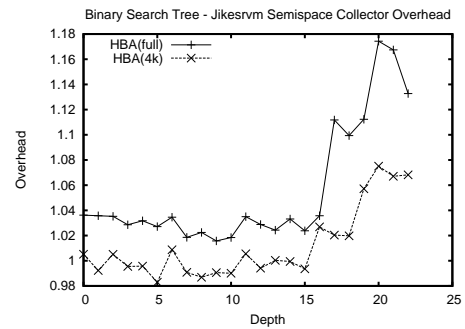}} 
\caption{Binary Search Tree: Semispace collector overheads}
\label{fig:bst_jikes_overhead}
\end{figure}

Finally we also measure overheads and performance with the more general DaCapo
benchmark suite~\cite{dacapo}. The results shown in
Fig~\ref{fig:dacapo_overhead} indicate that the overhead of adding HBA to the
garbage collector is under 15\% in all cases and usually under 10\%. In addition
for two of the benchmarks (antlr and fop) we see improved performance due to
more locality on the heap.

\begin{figure}
\centering
\scalebox{0.25}{\epsfig{angle=-90, file=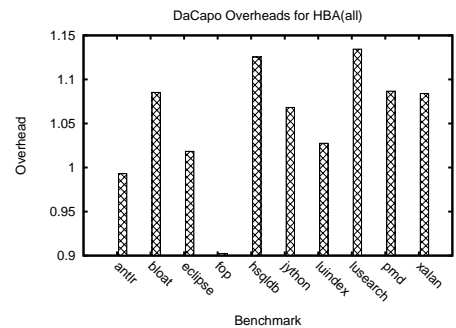}} 
\caption{DaCapo Benchmarks: Semispace collector overheads}
\label{fig:dacapo_overhead}
\end{figure}

\section{Related Work}
\label{sec:related}
There is a large body of existing research into improving the cache performance
of in-memory data. Broadly the approaches can be divided into three
classes. 

The first class of techniques deal with prefetching objects ahead of
use. An example of this is the approach of Luk et al.\ \cite{rds_prefetch}, who
place a prefetch pointer in linked list nodes to prefetch later nodes early
enough to avoid cache miss penalties during traversals. 
Dynamic approaches are also possible such as that of Chilimbi et
al.\ who profile a program to detect frequently occurring streams of accesses~\cite{chilimbi_hot}.

A second class of techniques is to statically modify the data structures
themselves to make them more cache friendly. One way is to use knowledge of the
cache hierarchy and transfer units to size data structure nodes, such as in
B-trees ~\cite{cormen}. This can be extended to make the B-tree nodes cache
friendly at various levels (similar to the objective of this work). Kim et al.\
\cite{fastbtree} extend the basic idea of B-trees to be architecture sensitive
at various levels using hierarchical blocking. Although their hierarchical
blocking produces layouts similar to our reorganisation algorithm they have a
static data structure redesign for B trees unlike our dynamic general purpouse
algorithm. Another approach to data structure design is cache oblivious data structures. These are
designed so as to improve spatial locality regardless of the level of memory
hierarchy and block size being considered. The ``van Emde Boas''
layout~\cite{veb}  forms the basis for many cache oblivious designs including
those for cache oblivious B-trees~\cite{streamingbtree}.  

A third class of techniques (including the one in this paper) are used at
runtime. One approach is to control memory allocation. Chilimbi et al.\
\cite{chilimbi_cc_layout} investigated the use of a specialised memory allocator
that could be given a hint about where to place the allocated node.
Another approach is to use the data structure traversal done by garbage
collectors to copy objects into new cache friendly
locations~\cite{copying}. Mark Adcock in his PhD thesis~\cite{adcock_phd}
considered a range of runtime data movement techniques including those triggered
by pointer updates. However none of these techniques consider the effect of
multiple units of locality in the memory hierarchy and in that sense this work
is orthogonal to all of them. It is possible to take the algorithm in this paper
and use it to improve on all of these locality maximisation techniques, which
are usually restricted to plain breadth-first search to discover nodes.

\section{Future Work}
The current implementation of HBA ignores the last level in a usual memory
memory hierarchy: persistent storage. It is extremely easy
to add a 512 byte sector size to HBA to also optimise layouts for transfer from
disk. Although we have not investigated this aspect yet, we believe HBA can also
significantly improve access to the last (persistent) level in typical memory
hierarchies.

\section{Conclusion}
We have presented a hierarchical blocking algorithm (HBA) that takes as input an
arbitrary graph and a description of a memory hierarchy and lays out graph nodes
to be sensitive to and provide better performance for that hierarchy. We have
investigated implementations of HBA in various settings and shown that it provides
non-trivial benefits in all of them; making the case the graph layout and memory
hierarchy sensitivity are important factors in the performance of graph
algorithms.

\bibliographystyle{plain}
\bibliography{paper}
\end{document}